\newcommand{\BEQ}{\begin{equation}}
\newcommand{\EEQ}{\end{equation}}
\def\bea{\begin{eqnarray}}
\def\eea{\end{eqnarray}}
\def\nn{\nonumber}
\theoremstyle{plain}
\newtheorem{Th}{Theorem}[section]
\newtheorem{Lem}[Th]{Lemma}
\newtheorem{Hyp}[Th]{Hypothesis}
\newtheorem{Rem}[Th]{Remark}
\theoremstyle{definition}
\def\bea{\begin{eqnarray}}
\def\eea{\end{eqnarray}}
\def\bes{\begin{equation*} \begin{split}}
\def\ees{\end{split} \end{equation*}}
\begin{document}

\title{RE-algebras, quasi-determinants and the full Toda system
}

\author[1,2]{Dmitry V. Talalaev\footnote{dtalalaev@yandex.ru}}


\affil[1]{\small Lomonosov Moscow State University, 
119991, Moscow, Russia.
}
\affil[2]{\small Demidov Yaroslavl State University, 150003, Sovetskaya Str. 14
}

\date{}

\maketitle

\abstract{In 1991, Gelfand and Retakh embodied the idea of a noncommutative Dieudonne determinant in the case of RTT algebra, namely, they found a representation of the quantum determinant of RTT algebra in the form of a product of principal quasi-determinants. In this note we construct an analogue of the above statement for the RE-algebra corresponding to the Drinfeld R-matrix for the order $n=2,3$. Namely, we have found a family of quasi-determinants that are principal with respect to the antidiagonal, commuting among themselves, whose product turns out to be the quantum determinant of this algebra. This family generalizes the construction of integrals of the full Toda system due to Deift et al.  for the quantum case of RE-algebras. In our opinion, this result also clarifies the role of RE-algebras as a quantum homogeneous spaces and can be used to construct effective quantum field theories with a boundary.
}

\section{Introduction}
\subsection{RE-algebras}
The  reflection equation algebras naturally arose in the context of the quantum inverse problem proposed by the Leningrad School. These algebras were first described in the work \cite{Che}, where they appeared in the description of the scattering process for systems with a boundary.
We will consider the so-called Hecke case of the RE-algebra associated with the Drinfeld $R$ matrix
\bea
R=\sum_{i<j}(q-q^{-1})E_{ii}\otimes E_{jj}+\sum_{i,j}q^{\delta(i,j)}E_{ij}\otimes E_{ji}.\nn
\eea
We denote by $\mathcal{A}$  the RE-algebra, defined by the generators $a_{ij}$ and the quadratic relations:
\bea
\label{RE}
RL_1RL_1=L_1RL_1R,
\eea
those both parts are exressions in $Mat_{n\times n}^{\otimes 2}\otimes \mathcal{A},$
the generating matrix $L$ of the generators $a_{ij}$ 
has the form
 \bea
 L=\sum E_{ij}\otimes a_{ij},\nn
 \eea
 the exression $L_1$ means
 \bea
 L_1=\sum E_{ij}\otimes 1 \otimes a_{ij}.\nn
 \eea

 Hereafter $E_{ij}$ are the matrix units in the matrix algebra $Math_{n\times n}.$
This algebra has numerous exceptional properties, it is a deformation of the universal enveloping algebra $U(gl(n))$, contains a center, which is a polynomial algebra with generators deforming classical invariant functions with respect to the $GL(n)$ group action. These and other properties of RE-algebras were studied in \cite{G1}, \cite{G2}, \cite{G3}.

The RE algebra is a comodular algebra over the RTT algebra defined by quadratic relations of the form
\bea
R T_1 T_2=T_1 T_2 R.\nn
\eea
The works \cite{FRT}, \cite{Kol}, \cite{DM} are devoted to this and other manifestations of the structure of the quantum homogeneous space of RE-algebras.

\subsection{Gelfand-Retach quasi-determinants}
The theory of quasi-determinants generalizes the Schur complement technique known in matrix algebra. This theory was constructed in
\cite{GR} and developed in a series of papers dealing with modern problems of cluster manifolds, including a noncommutative version of the Bruhat decomposition \cite{BR1}.
Actually, the quasi-determinant of the matrix $A=(a_{ij})$ of size $n\times n$ with elements in some associative ring is called the expression
\bea
|A|_{pq}=a_{pq}-r_p (A^{pq})^{-1} c_q;\nn
\eea
where $A^{pq}$ is the matrix $A$ with the $p$-th row and $q$-th column deleted, $r_p$ is the $p$-th row with the missing $q$-th element, and $c_q$ is the $q$-th column with the missing $p$-th element. Such a quasi-determinant is defined if the matrix $A^{pq}$ is invertible.

Quasi-determinants have a variety of interesting properties that generalize the properties of minors of  matrices in the commutative case: the so-called heridity property, Plucker, Sylvester identities,  Dodgson condensation principle, etc. These expressions play a role similar to the Plucker coordinates of the quantum analogs of flag varieties.
In \cite{GR}, the $q$-determinant of the RTT algebra was presented in terms of quasi-determinants. Namely, it was shown that
\bea
\label{qdet}
det_q T=|T|_{11} |T^{11}|_{22}\ldots t_{nn}
\eea
and moreover the multipliers in this product commute among themselves.
Let us note that this construction of the noncommutative determinant was proposed by Dieudonne in 1943 \cite{Died}.

\subsection{Full Toda system}
The full symmetric Toda\cite{DLNT} system is a generalization of the open Toda chain. The phase space of the system is the space of symmetric matrices
\bea
L=\left(\begin{array}{cccc}
a_{11} & a_{12} & \ldots & a_{1n} \\
a_{12} & a_{22} & \ldots & a_{2n}\\
\vdots & \vdots & \ddots & \vdots \\
a_{1n} & a_{2n} &\ldots & a_{nn}
\end{array} \right)\nn
\eea
which can be described using the Poisson algebra $S^\cdot(\mathfrak{b}_+)$ with the Kirillov-Kostant bracket. The involutive family for this system is constructed using the so-called chopping procedure. Let's introduce the notation
\bea
\Delta_k(\lambda)=det((L-\lambda Id)_k)\nn
\eea
 where $A_k$ means a submatrix of the matrix $A$ in which the first $k$ columns and the last $k$ rows are removed. One of the main statements of \cite{DLNT} is the theorem on the involutivity of the roots of all polynomials $\Delta_k(\lambda)$. In an alternative form, this statement means that the ratios of coefficients is $\Delta_k(\lambda)$ Poisson commute among themselves and with ratios of coefficients for other $k$. Here we give expressions for the DLNT Hamiltonians in the lowest case $n=3.$ Partial characteristic polynomials take the form:
 \bea
 \Delta_0(\lambda)&=&det(L-\lambda Id)=-\lambda^3+\lambda^2 Tr(L)-\lambda \sigma_2(L) +det(L);\nn\\
 \Delta_1(\lambda)&=&a_{12} a_{23}-(a_{22}-\lambda) a_{13}=\lambda a_{13}+(a_{12} a_{23}-a_{22} a_{13});\nn\\
 \Delta_2(\lambda)&=&a_{13}.\nn
 \eea
The involutive family is generated by:
\bea
I_{0,1}=Tr(L);\quad I_{0,2}=\sigma_2(L),\quad I_{0,3}=det(L),\quad I_{1,1}=(a_{12} a_{23}-a_{22} a_{13})/a_{13}.\nn
\eea
Note that this involutive family is also defined on the entire $S^\cdot(\mathfrak{gl}(n))$. In particular, the work \cite{DLT} is devoted to this.

The full Toda system is, among other things, of interest from the point of view of the geometry of the flag varieties, with its help, in particular, the question of the intersection of real double Bruhat cells was solved in \cite{CST}.

The quantum version of this system for $U(\mathfrak{gl}_n)$ was constructed in \cite{TQT} using quantization of the Gaudin system.

\section{RE-algebra and the commutative family}
\subsection{Commutation relations}
Consider the reflection equation algebra for the Drinfeld $R$-matrix for $n=3$. This algebra is given by the generating matrix
\bea
L=\left(\begin{array}{ccc}
l_1 & a^+ & c^+ \\
a^- & l_2 & b^+  \\
c^- & b^- & l_3
\end{array}\right).
\eea
The equation (\ref{RE}) can be represented in the form of Poincare-Birkhoff-Witt relations, which were kindly provided to the author by Pavel Pyatov \cite{Pyatov} (everywhere below $\lambda=q-q^{-1}$):
\bea
l_i l_j&=&l_j l_i;\nn\\
a^+ l_1&=&q^2 l_1 a^+;\nn\\
a^+ l_2&=&(l_2-\lambda q^{-1}l_1)a^+;\nn\\
a^+ l_3&=&l_3 a^+;\nn\\
a^- l_1&=&q^{-2} l_1 a^-;\nn\\
a^- l_2&=&(l_2+\lambda q^{-3}l_1)a^-;\nn\\
a^- l_3&=&l_3 a^-;\nn
\eea
\bea
b^+ l_1&=&l_1 b^+;\nn\\
b^+ l_2&=&(q^2 l_2-\lambda^2 l_1)b^++\lambda c^+ a^-;\nn\\
b^+ l_3&=&(l_3-\lambda q^{-1}l_2+\lambda^2 q^{-2} l_1)b^+-\lambda q^{-2} c^+ a^-;\nn\\
b^- l_1&=&l_1 b^-;\nn\\
b^- l_2&=&(q^{-2} l_2+\lambda^2q^{-2} l_1)b^--\lambda q^{-2} a^+ c^-;\nn\\
b^- l_3&=&(l_3+\lambda q^{-3}l_2-\lambda^2 q^{-4} l_1)b^-+\lambda q^{-4} a^+ c^-;\nn
\eea
\bea
c^+ l_1&=&q^2 l_1 c^+;\nn\\
c^+ l_2&=&(l_2+\lambda^2 l_1) c^++q\lambda a^+ b^+;\nn\\
c^+ l_3&=&(l_3-\lambda q^{-1}l_1)c^+-\lambda q^{-1} a^+ b^+;\nn\\
c^- l_1&=&q^{-2} l_1 c^-;\nn\\
c^- l_2&=&(l_2-\lambda^2 q^{-2} l_1)c^--q\lambda b^- a^-;\nn\\
c^- l_3&=&(l_3+\lambda q^{-3}l_1)c^-+\lambda q^{-1} b^- a^-;\nn
\eea
\bea
c^+ a^+&=&q a^+ c^+;\nn\\
c^+ b^+&=&q b^+ c^+;\nn\\
b^+ a^+&=&q a^+ b^+ - \lambda l_1 c^+;\nn\\
a^- c^-  &=&q c^- a^-;\nn\\
b^- c^-&=& q c^- b^-;\nn\\
a^- b^- &=& q b^- a^-+\lambda q^{-2} l_1 c^-;\nn
\eea
\bea
a^- b^+ &=&q^{-1}b^+ a^-;\nn\\
a^- c^+&=& q^{-1} c^+ a^- -\lambda q^{-1} l_1 b^+;\nn\\
b^- a^+&=&q^{-1} a^+ b^-;\nn\\
b^- c^+&=&q c^+ b^- -\lambda(l_3-l_2-q^{-2} l_1)a^+;\nn\\
c^- a^+&=&q^{-1} a^+ c^- - \lambda q^{-1} l_1 b^-;\nn\\
c^- b^+&=&q b^+ c^--\lambda(l_3-l_2-q^{-2}l_1)a^-;\nn
\eea
\bea
a^- a^+&=&a^+ a^-+\lambda q^{-1} l_1 (l_1-l_2);\nn\\
c^- c^+&=&c^+ c^- +\lambda a^{-1}a^+ a^-+\lambda q^{-1} l_1(l_1-l_3);\nn\\
b^- b^+&=&b^+ b^-+\lambda q^{-3} a^+ a^- -\lambda q^{-1} c^+ c^-+\lambda q^{-1} (l_2-\lambda q^{-1} l_1)(l_2-l_3).\nn
\eea

\subsection{Properties of quasi-determinants}
Let's introduce the notation $\Delta^{ij}_{kl}$ for the quasi-determinant of the submatrix $L$ defined by rows $i<j$ and columns $k<l$, the highlighted element is always taken in the lower left corner, that is, the one with indexes $j,k$. For example,
\bea
\Delta^{12}_{13}=a^--b^+(c^+)^{-1}l_1.\nn
\eea
Such quasi-determinants are called flag coordinates, they are invariant with respect to the multiplication of the matrix $L$ on the left by a lower-triangular unipotent matrix.
We will also introduce notations for quasi-determinants that are principal in relation to the antidiagonal:
\bea
I_1&=&c^+;\nn\\
I_2&=&\Delta^{12}_{23}=l_2-b^+ (c^+)^{-1} a^+;\nn\\
I_3&=&c^- -(b^-,l_3)\left(\begin{array}{cc}
a^+ & c^+\\
l_2 & b^+
\end{array}\right)^{-1}
\left(
\begin{array}{c}
 l_1 \\
 a^- 
\end{array}
\right).\nn
\eea
By direct calculation, we can show:
\begin{Lem}
\bea
\label{commI1I2}
[I_1,I_2]=0.
\eea
\end{Lem}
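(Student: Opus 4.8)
The plan is to evaluate the commutator $[I_1,I_2]=c^+I_2-I_2c^+$ directly from the commutation relations listed above, arranging the computation so that the inverse $(c^+)^{-1}$ never has to be manipulated on its own. Since $I_1=c^+$ and $I_2=l_2-b^+(c^+)^{-1}a^+$, one first writes
\bea
[I_1,I_2]=[c^+,l_2]-\left(c^+\,b^+(c^+)^{-1}a^+-b^+(c^+)^{-1}a^+\,c^+\right).\nn
\eea
The point is that the relations $c^+b^+=qb^+c^+$ and $c^+a^+=qa^+c^+$ are pure rescalings, so conjugation by $c^+$ multiplies the sandwiched generator by a scalar: $c^+b^+(c^+)^{-1}=q\,b^+$ and $(c^+)^{-1}a^+c^+=q^{-1}a^+$. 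Hence the bracketed term collapses to $(q-q^{-1})\,b^+a^+=\lambda\,b^+a^+$, the inverse disappears, and one is left with $[I_1,I_2]=[c^+,l_2]-\lambda\,b^+a^+$.

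Next I would substitute the two remaining structural relations, namely the one expressing $c^+l_2$ through $l_2c^+$, $l_1c^+$ and $a^+b^+$, and the relation $b^+a^+=qa^+b^+-\lambda\,l_1c^+$. After this substitution the whole commutator is a linear combination, with scalar coefficients that are Laurent polynomials in $q$, of the two monomials $a^+b^+$ and $l_1c^+$ only. The coefficient of $a^+b^+$ cancels at once, while the coefficient of $l_1c^+$ vanishes because of the precise shape of the $c^+l_2$ relation together with the Hecke normalization $\lambda=q-q^{-1}$; this last cancellation is the content of the lemma.

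I expect the only delicate point to be the elimination of $(c^+)^{-1}$: the scaling relations must be applied on the correct side — right multiplication by $(c^+)^{-1}$ for the $b^+$ factor, left multiplication for the $a^+$ factor — since reversing either replaces a $q$ by a $q^{-1}$ and destroys the final cancellation. Apart from this bookkeeping there is no real obstacle; the statement is exactly the sort of ``direct calculation'' which, after the reorganization above, reduces to checking a couple of scalar identities in $q$.
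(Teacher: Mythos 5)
Your strategy is exactly the intended one (the paper itself offers only ``by direct calculation''), and the first half is carried out correctly: conjugation via $c^+b^+(c^+)^{-1}=q\,b^+$ and $(c^+)^{-1}a^+c^+=q^{-1}a^+$ does collapse the fractional part of $I_2$ to $\lambda\,b^+a^+$, so that $[I_1,I_2]=[c^+,l_2]-\lambda\,b^+a^+$; and after substituting the $c^+l_2$ relation and $b^+a^+=q\,a^+b^+-\lambda\,l_1c^+$ everything does reduce to the two monomials $a^+b^+$ and $l_1c^+$, with the $a^+b^+$ coefficient cancelling immediately. The four relations you isolate are precisely the ones needed.

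The gap is in your last step, where the vanishing of the $l_1c^+$ coefficient is asserted rather than checked. Carrying out the arithmetic with the relations exactly as printed gives $[c^+,l_2]=\lambda^2\,l_1c^+ + q\lambda\,a^+b^+$ and $-\lambda\,b^+a^+=-q\lambda\,a^+b^+ +\lambda^2\,l_1c^+$, so the total is $2\lambda^2\,l_1c^+$, not $0$: the two contributions to the $l_1c^+$ coefficient have the \emph{same} sign, and no identity in $q$ rescues this. Hence either one of the two structural relations you invoke carries a sign typo in the paper's table (for instance $b^+a^+=q\,a^+b^+ +\lambda\,l_1c^+$ would restore the cancellation; flipping the sign of $\lambda^2 l_1$ in the $c^+l_2$ relation would too, but that choice is inconsistent with centrality of the quantum trace $q^{-2}l_1+l_2+q^2l_3$, which the printed relations do satisfy), or the lemma needs re-examination. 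Before your proof can be accepted you must resolve this, most safely by rederiving the $c^+l_2$ and $b^+a^+$ relations directly from the defining equation (\ref{RE}) rather than trusting the printed list. As it stands, the single nontrivial step of your sketch is exactly the one that fails when verified.
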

The next auxiliary statement is the expression $I_3$ in terms of second-order quasi-determinants, which is a special case of the Dodgson condensation property:
\begin{Lem}
\bea
I_3=\Delta^{23}_{13}-\Delta^{23}_{23}(\Delta^{12}_{23})^{-1}\Delta^{12}_{13}.\nn
\eea
\end{Lem}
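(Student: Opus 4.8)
The plan is to recognize that $I_3$ is simply the quasi-determinant of the whole generating matrix $L$ taken with respect to its lower-left (anti-diagonal) entry, exactly as $I_1$ and $I_2$ are the corresponding quasi-determinants of the $1\times 1$ and $2\times 2$ top-right corner blocks (the latter block being precisely $M=\left(\begin{smallmatrix}a^+&c^+\\ l_2&b^+\end{smallmatrix}\right)$). Writing $M$ for the submatrix of $L$ obtained by deleting the third row and the first column, the definition of $I_3$ reads
\[
I_3=c^- - (b^-,\,l_3)\,M^{-1}\begin{pmatrix} l_1\\ a^-\end{pmatrix}.
\]
Introduce the column $\left(\begin{smallmatrix}x\\ y\end{smallmatrix}\right)=M^{-1}\left(\begin{smallmatrix}l_1\\ a^-\end{smallmatrix}\right)$, i.e.\ the unique solution of the linear system $a^+x+c^+y=l_1$, $\,l_2x+b^+y=a^-$ (in the localization in which $M$, equivalently $\Delta^{12}_{23}$, is invertible — which is the standing assumption making all the quasi-determinants in the statement meaningful). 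Then $I_3=c^--b^-x-l_3y$, and the whole identity is reduced to extracting $x$ and this combination out of the $2\times 2$ system in two different ways.

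First I would solve for $x$ by eliminating $y$ from the \emph{first} equation, $y=(c^+)^{-1}(l_1-a^+x)$, and substituting into the second; this gives $\big(l_2-b^+(c^+)^{-1}a^+\big)x=a^--b^+(c^+)^{-1}l_1$, that is $x=(\Delta^{12}_{23})^{-1}\Delta^{12}_{13}$, since $\Delta^{12}_{23}=l_2-b^+(c^+)^{-1}a^+$ (this is exactly $I_2$) and $\Delta^{12}_{13}=a^--b^+(c^+)^{-1}l_1$. Next I would eliminate $y$ using instead the \emph{second} equation, $y=(b^+)^{-1}(a^--l_2x)$, and substitute it into $I_3=c^--b^-x-l_3y$, obtaining
\[
I_3=\big(c^--l_3(b^+)^{-1}a^-\big)-\big(b^--l_3(b^+)^{-1}l_2\big)x=\Delta^{23}_{13}-\Delta^{23}_{23}\,x .
\]
Plugging in the value of $x$ found above yields precisely $I_3=\Delta^{23}_{13}-\Delta^{23}_{23}(\Delta^{12}_{23})^{-1}\Delta^{12}_{13}$, which is the claim.

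I do not expect a genuine obstacle here: unlike the commutativity statement $[I_1,I_2]=0$, this identity uses none of the commutation relations of $\mathcal A$ — it is a purely formal fact about quasi-determinants (the $n=3$ case of the noncommutative Sylvester / Dodgson condensation identity, cf.\ \cite{GR}), valid over any associative ring once the relevant elements are invertible. The only points requiring care are the bookkeeping of which entry is highlighted in each $\Delta^{ij}_{kl}$ and the consistency of the invertibility hypotheses; the substantive content is just that eliminating $y$ from one and the same $2\times 2$ linear system through its two equations produces the two halves of the asserted formula.
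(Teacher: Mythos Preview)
Your argument is correct and is exactly in the spirit of what the paper invokes: the paper does not prove this lemma at all but simply declares it ``a special case of the Dodgson condensation property'', while you supply the explicit $2\times2$ elimination that realizes that condensation step. Your identifications $\Delta^{12}_{23}=l_2-b^+(c^+)^{-1}a^+$, $\Delta^{12}_{13}=a^--b^+(c^+)^{-1}l_1$, $\Delta^{23}_{13}=c^--l_3(b^+)^{-1}a^-$, $\Delta^{23}_{23}=b^--l_3(b^+)^{-1}l_2$ match the paper's convention (lower-left entry highlighted), and your remark that no RE-relations are used is also consistent with the paper's viewpoint.
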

The next statement is already specific to RE-algebra:
\begin{Lem}
\label{Lem12131223}
\bea
[\Delta^{12}_{23},\Delta^{12}_{13}]=0.\nn
\eea
\end{Lem}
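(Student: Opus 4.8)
The plan is to reduce the identity to a finite normal-ordering computation in the localization of $\mathcal{A}$ at $I_1=c^+$. Both quasi-determinants have the form $X-b^+(c^+)^{-1}Y$ with $X,Y$ generators: $\Delta^{12}_{23}=l_2-b^+(c^+)^{-1}a^+$ and $\Delta^{12}_{13}=a^--b^+(c^+)^{-1}l_1$. So the first step is to invert the defining relations above in order to move $(c^+)^{-1}$ past the other generators. The homogeneous relations $c^+l_1=q^2l_1c^+$, $c^+a^+=qa^+c^+$, $c^+b^+=qb^+c^+$ give $(c^+)^{-1}l_1=q^{-2}l_1(c^+)^{-1}$, $(c^+)^{-1}a^+=q^{-1}a^+(c^+)^{-1}$, $(c^+)^{-1}b^+=q^{-1}b^+(c^+)^{-1}$, while $c^+l_2$ and $a^-c^+$ are inhomogeneous; in particular $(c^+)^{-1}a^-$ picks up a term containing $(c^+)^{-2}$, so the naive hope that $(c^+)^{-1}$ is ``central up to a scalar'' fails.

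The observation that makes the computation finite is that the single combination $u:=b^+(c^+)^{-1}$ has \emph{polynomial} straightening rules: although $(c^+)^{-1}$ alone does not, the higher powers of $(c^+)^{-1}$ that appear get reabsorbed into powers of $u$ (e.g.\ $(b^+)^2(c^+)^{-2}$ is proportional to $u^2$). Writing $\Delta^{12}_{23}=l_2-ua^+$ and $\Delta^{12}_{13}=a^--ul_1$, I would derive the rules $ul_1=q^{-2}l_1u$, $ua^+=a^+u+(\text{term}\propto l_1)$, $ua^-=a^-u+(\text{term}\propto l_1u^2)$ and $ul_2=q^2l_2u+(\text{terms}\propto a^-,\,a^+u^2,\,l_1u)$, the precise coefficients being read off from the $b^+a^+$, $c^+l_2$, $b^+l_2$ and $a^-c^+$ relations. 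As a consistency check I would first re-derive the relation $[I_1,I_2]=0$, which isolates the $b^+a^+$, $c^+b^+$ and $c^+l_2$ relations.

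Then I would expand
\[
[\Delta^{12}_{23},\Delta^{12}_{13}]=[l_2-ua^+,\,a^--ul_1]=[l_2,a^-]-[l_2,ul_1]-[ua^+,a^-]+[ua^+,ul_1],
\]
put each of the four pieces into normal form, with all $l$'s and $a$'s to the left of powers of $u$, using the straightening rules together with $l_2a^--a^-l_2\propto l_1a^-$ and $a^-a^+-a^+a^-\propto l_1(l_1-l_2)$, and collect coefficients. The potentially obstructing monomials are of the types $l_1a^-$, $l_1l_2u$, $l_1a^+u^2$ and $l_1^2u$; the claim is that the contributions to each of these types cancel among the four pieces.

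The main obstacle is exactly this last cancellation. At $q=1$ the statement is trivial, so everything rests on the quantum corrections --- including the $O((q-q^{-1})^2)$ and $O((q-q^{-1})^3)$ terms --- lining up precisely. Two points need particular care: $l_1$ only $q$-commutes (not commutes) with $a^\pm$, so the order in which it sits must be tracked throughout; and the $u^2$-terms are produced both by $ua^-$ and by $ul_2$, so their coefficients must be computed exactly. That this works --- and would fail under a perturbation of the coefficients in the Pyatov presentation --- is precisely what makes the statement specific to the RE-algebra; a conceptual proof via the hereditary/Dodgson properties of quasi-determinants together with the RTT-comodule structure would be desirable, but is not needed here.
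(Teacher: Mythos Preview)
Your plan is sound and, like the paper's argument, reduces the commutator to a finite polynomial computation that is then checked directly; in particular your key observation that $u=b^+(c^+)^{-1}$ has \emph{polynomial} straightening rules (so that the $u^2$-terms coming from $ua^-$ and $ul_2$ close up) is correct and is exactly what makes the normal-ordering terminate.

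The paper organizes the same reduction differently and a bit more economically. Rather than introduce $u$ and derive new relations for it, it uses only the homogeneous $q$-commutation of $c^+$ with $b^+$, $a^+$, $l_1$ to factor $(c^+)^{-1}$ out on \emph{both} outer sides at once:
\[
[\Delta^{12}_{23},\Delta^{12}_{13}]
=(c^+)^{-1}\Bigl((c^+l_2-qb^+a^+)(a^-c^+-q^{-2}b^+l_1)-(c^+a^--qb^+l_1)(l_2c^+-q^{-1}b^+a^+)\Bigr)(c^+)^{-1},
\]
so that the bracketed expression is already polynomial in the original nine generators of $\mathcal A$, with no auxiliary variable; its vanishing is then ``verified by direct calculation.'' This bypasses precisely the inhomogeneous rules $ua^-$, $ul_2$ that you flag as the delicate step. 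Your route buys a cleaner four-term expansion of the commutator; the paper's buys a final check that lives entirely in $\mathcal A$ and uses only the PBW relations already listed. Neither proof writes the final cancellation out in full.
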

\begin{proof}
Let's write down the left side of the equality explicitly and use the switching relations to move $(c^+)^{-1}$ out of brackets
\bea
&&(l_2- b^+(c^+)^{-1} a^+)(a^--b^+ (c^+)^{-1} l_1)-(a^--b^+(c^+)^{-1} l_1)(l_2-b^+ (c^+)^{-1}a^+)\nn\\
&=&(c^+)^{-1}\left( (c^+ l_2-q b^+ a^+)(a^- c^+-q^{-2}b^+ l_1)-(c^+ a^--q b^+ l_1)(l_2 c^+-q^{-1} b^+ a^+)\right)(c^+)^{-1}.\nn
\eea
The expression inside the brackets is devoid of fractions
\bea
(c^+ l_2-q b^+ a^+)(a^- c^+-q^{-2}b^+ l_1)-(c^+ a^--q b^+ l_1)(l_2 c^+-q^{-1} b^+ a^+).\nn
\eea
Its triviality is verified by direct calculation.
\end{proof}
Thanks to the lemma \ref{Lem12131223}, the expression for $I_3$ can be represented as:
\bea
I_3=(\Delta^{23}_{13} \Delta^{12}_{23}-\Delta^{23}_{23} \Delta^{12}_{13})I_2^{-1}.\nn
\eea
The above observations lead to the following expression
\bea
Y=I_3 I_2 I_1&=&(\Delta^{23}_{13} \Delta^{12}_{23}-\Delta^{23}_{23} \Delta^{12}_{13})c^+\nn\\
&=&(c^- - l_3 (b^+)^{-1}a^-)(l_2 c^+ - q^{-1} b^+ a^+)-(b^- - l_3 (b^+)^{-1} l_2)(a^- c^+ -q^{-2} b^+ l_1).\nn
\eea
It is easy to see that $Y$ has already been eliminated from fractions.
\begin{Lem}
\label{center}
The element $Y$ lies in the center of $\mathcal{A}$ and is proportional to the quantum determinant of the matrix $L$.
\end{Lem}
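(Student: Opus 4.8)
The plan is to recognise $Y$ as a scalar multiple of the quantum determinant $\det_q\mathcal{A}$ and then to read off centrality. Starting from the fraction--free form displayed just above the lemma,
\[
Y=(c^- - l_3 (b^+)^{-1}a^-)(l_2 c^+ - q^{-1} b^+ a^+)-(b^- - l_3 (b^+)^{-1} l_2)(a^- c^+ -q^{-2} b^+ l_1),
\]
I would first expand the two products and eliminate every occurrence of $(b^+)^{-1}$, using the relations of Section~2.1 that move $b^\pm,c^\pm$ past the $l_i$ together with $a^-b^+=q^{-1}b^+a^-$, $c^+b^+=qb^+c^+$ and $b^+l_1=l_1b^+$; the disappearance of the fractions is exactly the assertion already made before the lemma. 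Sorting the surviving monomials into a fixed PBW order then presents $Y$ as an explicit noncommutative polynomial of degree $3$ in the nine generators $l_1,l_2,l_3,a^\pm,b^\pm,c^\pm$.

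Next I would compare this polynomial with the quantum determinant of the $n=3$ Hecke-type RE-algebra. One may take $\det_q\mathcal{A}$ from the literature --- it is the top generator of the polynomial centre $Z(\mathcal{A})=\mathbb{C}[c_1,c_2,c_3]$ with $\deg c_k=k$ described in \cite{G1},\cite{G2},\cite{G3}, obtainable from the rank--one $q$-antisymmetrizer acting on $L_1\widehat R L_2\widehat R L_3$ or from a column expansion --- and verify termwise that $Y$ agrees with $\det_q\mathcal{A}$ up to a nonzero scalar, the scalar being fixed, e.g., by inspecting the coefficient of $l_1l_2l_3$ and by passing to the classical limit: at $q=1$ all defining relations become commutative, $Y=I_3I_2I_1$ degenerates to the Dieudonn\'e product of the antidiagonal quasi-determinants of the commuting matrix $L$, and a direct $3\times3$ computation gives $I_3I_2I_1=-\det L$, so the scalar is $-1$ there. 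A slightly more structural variant uses that the defining relations are quadratic, so $\mathcal{A}$ is $\mathbb{Z}_{\ge0}$-graded and each $I_k$ has degree $1$, whence $Y$ is homogeneous of degree $3$; one then checks $[Y,x]=0$ for $x$ running over the nine generators (the symmetry $q\leftrightarrow q^{-1}$ paired with the transpose-type anti-automorphism of $\mathcal{A}$ roughly halves this) and concludes $Y=\alpha c_3+\beta c_1c_2+\gamma c_1^3$ with $\alpha,\beta,\gamma$ pinned down by the $q=1$ evaluation and one further monomial, giving $Y=-c_3=-\det_q\mathcal{A}$. Either way, centrality of $Y$ follows immediately from centrality of $\det_q\mathcal{A}$.

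The conceptual ingredients --- vanishing of the $(b^+)^{-1}$ terms, homogeneity of degree $3$, and the classical limit that fixes the constant --- are short. The main obstacle, and the place where errors creep in, is the bookkeeping that brings $Y$ and $\det_q\mathcal{A}$ to a common normal form: the noncommutative corrections, such as $b^+l_2=(q^2l_2-\lambda^2l_1)b^++\lambda c^+a^-$ and $b^-c^+=qc^+b^--\lambda(l_3-l_2-q^{-2}l_1)a^+$, multiply out into many pieces that must be carried through and cancelled.
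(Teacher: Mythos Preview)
Your plan is essentially the paper's own: bring $Y$ to a PBW polynomial and identify it with the known quantum determinant. The paper records the outcome explicitly as $Y=-q^{2}\,e_3^{(3)}$, with
\[
e_3^{(3)}=c^+ b^-a^- +q^{-2} a^+ b^+ c^- +\bigl(l_3-\lambda q^{-1}(l_2+q^{-2} l_1)\bigr)\,e_2^{(2)}-l_1 b^+ b^- - q^{-2}\bigl(l_2-\lambda(q+q^{-1}) l_1\bigr)\,c^+ c^-,
\]
and $e_2^{(2)}=- a^+ a^- +l_1 (l_2-\lambda q^{-1} l_1)$; centrality is then inherited from that of $e_3^{(3)}$ (the paper also remarks it can be checked directly on generators).

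One step in your proposal does not work as written: the proportionality constant is $q$-dependent, equal to $-q^{2}$, so the classical limit $q\to1$ cannot by itself fix it, and your stated outcome $Y=-c_3$ is off by precisely this factor (unless your $c_3$ already absorbs $q^{2}$). In the ``structural variant'' you write $Y=\alpha c_3+\beta c_1c_2+\gamma c_1^3$ and propose to determine the coefficients from the $q=1$ value plus one further monomial; but $\alpha,\beta,\gamma$ are functions of $q$, and evaluating at a single point together with one extra linear condition is not enough for three unknowns. The clean route --- and the one the paper takes --- is your first variant: compare a couple of monomial coefficients at generic $q$ (for instance the coefficient of $l_1l_2l_3$, as you suggest) and read off $-q^{2}$ directly.
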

\begin{proof}
The centrality of the $Y$ element can be verified by direct calculation. However, it is curious that $Y$ coincides up to a constant with the quantum determinant of the algebra in question \cite{GS1}:
\bea
Y=-q^2 e_3^{(3)}\nn
\eea
where the element $e_3^{(3)}$ can be found by the formulas \cite{Pyatov}
\bea
e_3^{(3)}&=&c^+ b^-a^- +q^{-2} a^+ b^+ c^- +(l_3-\lambda q^{-1}(l_2+  q^{-2} l_1)) e_2^{(2)}\nn\\
&-&l_1 b^+ b^- - q^{-2}(l_2- \lambda(q+q^{-1}) l_1) c^+ c^-;\nn\\
e_2^{(2)}&=&- a^+ a^- +l_1 (l_2-\lambda q^{-1}  l_1).\nn
\eea
\end{proof}

\subsection{The main statement and conclusion}

\begin{Th}
The elements $I_1,I_2,I_3$ commute in the corresponding localization of the algebra $\mathcal{A}.$
\end{Th}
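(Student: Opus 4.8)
The point of departure is the observation that by this stage almost everything has been done: Lemma~\ref{center} identifies the product $Y=I_3I_2I_1$ with (a scalar multiple of) the quantum determinant and in particular places it in the center of $\mathcal{A}$, while the first Lemma of this subsection, equation~(\ref{commI1I2}), already gives $[I_1,I_2]=0$. The plan is to deduce the two remaining commutators, $[I_1,I_3]=0$ and $[I_2,I_3]=0$, from centrality of $Y$ by a purely formal cancellation argument. First I would fix the localization $\widetilde{\mathcal{A}}$ of $\mathcal{A}$ in which $I_1=c^+$ and $I_2=\Delta^{12}_{23}$ become invertible; this is the ``corresponding localization'' of the statement, and it is precisely the smallest one in which $I_3$ makes sense as written. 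Since $\mathcal{A}$ is a flat deformation of $U(\mathfrak{gl}_3)$, it is a Noetherian domain, so this localization exists and the images of $I_1,I_2$ are regular elements, i.e.\ they may be cancelled from either side of an identity.

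To get $[I_1,I_3]=0$, start from $YI_1=I_1Y$, that is
\[
I_3I_2I_1^2=I_1I_3I_2I_1 .
\]
Cancelling one factor $I_1$ on the right gives $I_3I_2I_1=I_1I_3I_2$; rewriting the left-hand side as $I_3I_1I_2$ by means of $[I_1,I_2]=0$ and then cancelling $I_2$ on the right yields $I_3I_1=I_1I_3$. For $[I_2,I_3]=0$ I would argue in the same way from $YI_2=I_2Y$, i.e.
\[
I_3I_2I_1I_2=I_2I_3I_2I_1 .
\]
Moving $I_1$ past $I_2$ on the left-hand side (again using $(\ref{commI1I2})$) turns this into $I_3I_2^2I_1=I_2I_3I_2I_1$; cancelling first $I_1$ and then $I_2$ on the right gives $I_3I_2=I_2I_3$. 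Together with $[I_1,I_2]=0$ this shows that $I_1,I_2,I_3$ commute pairwise in $\widetilde{\mathcal{A}}$, which is the assertion of the theorem.

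The genuinely non-formal content of the argument lies entirely upstream, in Lemmas~\ref{Lem12131223} and~\ref{center} and in the ``fraction-free'' rewriting that produced $Y=I_3I_2I_1=(\Delta^{23}_{13}\Delta^{12}_{23}-\Delta^{23}_{23}\Delta^{12}_{13})c^+$; verifying the centrality of that element is where the reflection-equation relations really get used. Consequently the only subtlety that remains to be addressed carefully at this last step is the legitimacy of the cancellations, i.e.\ the fact that $I_1$ and $I_2$ are not zero divisors — which is why it is worth recording explicitly that $\mathcal{A}$ is a domain (being a flat deformation of $U(\mathfrak{gl}_3)$) before passing to the localization. Once that is granted, the commutativity of the whole family is immediate from the centrality of the quantum determinant.
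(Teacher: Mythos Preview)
Your argument is correct and matches the paper's own proof: the paper simply writes $I_3 = Y\,I_1^{-1}I_2^{-1}$ in the localization and invokes centrality of $Y$ together with $[I_1,I_2]=0$, which is exactly the content of your cancellation argument. Your version is in fact slightly more careful, since you make explicit the domain/localization issue that justifies the cancellations.
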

\begin{proof}
It remains to prove that $I_3$ commutes with generators of the first and second order. To do this, we represent $I_3$ as
\bea
I_3=Y (I_1)^{-1} (I_2)^{-1}.\nn
\eea
\end{proof}
First of all, let's formulate a general hypothesis, the proof of which for $n=3$ is the principal result of this note, and which is quite obvious for $n=2.$
\begin{Hyp}
Consider the RE-algebra for the Drinfeld $R$-matrix for an arbitrary $n$
and the quasi-determinants of the form:
\bea
I_{n-k}=|L^{(k)}|_{n-k,k+1},\nn
\eea
where $L^{(k)}$ means the upper right block of the matrix $L$ after cutting out the lower $k$ rows and the left $k$ columns. These elements of the corresponding localization of quantum algebra form a commutative family, in addition, their product
\bea
Y=\prod_{k=1}^n I_k\nn
\eea
lies in the center and is proportional to the quantum determinant.
\end{Hyp}
\begin{Rem}
The connection between RE-algebras and the full Toda system is largely due to the fact that the quadratic Poisson structure of this system, introduced in \cite{ST} and studied in detail in a series of papers \cite{S}, \cite{LP}, \cite{OR}
\bea
\{U\otimes U\}=U_1 U_2 r_1+U_1 r_2 U_2+ U_2 r_3 U_1+r_4 U_1 U_2,\nn
\eea
is the classical limit of the RE-relations \ref{RE} in the neighborhood of unity
\bea
L=1+\varepsilon U.\nn
\eea
\end{Rem}

We assume that this result is directly related to the geometry of quantum homogeneous spaces, an example of which is the RE-algebra. We are also confident that the statement of the general hypothesis about an alternative to the Dieudonne determinant \cite{Died} is directly related to the description of Bruhat cells of this quantum algebra \cite{BR1}.

\section*{Acknowledgements}
The author is grateful to D. Gurevich, V. Rubtsov, V. Sokolov for extremely productive communication that prompted the formulation of the main hypothesis of the work, as well as for the hospitable and fruitful atmosphere of IHES, where most of this research was conducted. The author expresses special gratitude to P. Pyatov for the calculations provided in \cite{Pyatov}. This work was carried out within the framework of a development programme for the Regional Scientific and Educational Mathematical Center of the Yaroslavl State University with financial support from the Ministry of Science and Higher Education of the Russian Federation (Аgreement on provision of subsidy from the federal budget No. 075-02-2024-1442).

\end{document}